\documentclass[runningheads, 11pt]{llncs}
\pdfoutput=1 
\pagestyle{plain}

\usepackage{amsmath}
\usepackage{enumitem}
\usepackage{amssymb}
\usepackage{commath}
\usepackage[hidelinks]{hyperref}
\usepackage{algorithm}
\usepackage{algorithmicx}
\usepackage{algpseudocode}%

\def \vdf	{\textsc{VDF}}
\def \poly	{\mathtt{poly}}
\def \negl	{\mathtt{negl}}
\def \RO	{\mathsf{H}}
\def \adv       {\mathcal{A}}
\def \adf       {\mathcal{B}}
\def \pp   	{\mathbf{pp}}
\def \setup	{\mathsf{Setup}}
\def \eval	{\mathsf{Eval}}
\def \Verify	{\mathsf{Verify}}
\def \st	{\mathtt{state}}
\def \mod       {\; \mathbf{mod} \;}
\def \multgroup#1{(\mathbb{Z}/#1\mathbb{Z})^\times}
\def \vrf       {\mathcal{V}}
\def \prv       {\mathcal{P}}
\def \X         {\mathcal{X}}
\def \Y         {\mathcal{Y}}
\def \O     	{\mathcal{O}}
\def \QR 	{\mathbb{QR}}
\def \F 	{\mathbb{F}}
\def \G		{\mathbb{G}}
\def \PRIMES 	{\mathbb{P}}
\def \Z 	{\mathbb{Z}}

\newtheorem{assumption}{Assumption.}
\sloppy
\begin{document}
\title{Single Squaring Verifiable Delay Function from Time-lock Puzzle in the Group of Known Order}

\author{Souvik Sur \orcidID{0000-0003-1109-8595}}

\institute{\email{souviksur@gmail.com}}

\maketitle              

\begin{abstract}\label{abstract}
\hyphenpenalty=8000\exhyphenpenalty=8000\tolerance=500\pretolerance=500%
A Verifiable Delay Function ($\vdf$) is a function that takes
a specified sequential time $T$ to be evaluated, but can be verified in $\O(\log T)$-time. 
For meaningful security, $T$ can be at most subexponential in the security
parameter $\lambda$ but has no lower bound.
$\vdf$s are useful in several applications ranging from randomness beacons 
to sustainable blockchains but are really rare in practice.
The verification in all of these $\vdf$s requires $\Omega(\lambda,\log{T})$
sequential time.
%

This paper derives a verifiable delay function that is verifiable in $\O(1)$-sequential
time. The key observation is that the prior works use subexponentially-hard 
algebraic assumptions for their sequentiality.
On the contrary, we derive our $\vdf$ from a polynomially-hard sequential
assumption namely the time-lock puzzle over the group of known order. In particular, we
show that time-lock puzzle can be sequentially-hard even when the order of the group is
known but the delay parameter is polynomially-bounded
i.e., $T\le \poly(\lambda)$. 
As an add-on advantage, our $\vdf$ requires only one sequential squaring to verify.
Thus, in our $\vdf$, the sequential effort required for 
verification is fixed and independent of the delay parameter $T$.

\keywords{
Verifiable delay function
\and Modulo exponentiation
\and Sequentiality
\and Time-lock puzzle
\and Random oracle model
}
\end{abstract}
\section{Introduction}\label{introduction}

 The notion of verifiable delay functions was
 introduced in~\cite{Dan2018VDF}. A verifiable delay function is a 
 function with the domain $\mathcal{X}$ and the range $\mathcal{Y}$, that takes a specified 
 number of sequential steps $T$ to be evaluated (irrespective of the amount of parallelism) and 
 can be verified efficiently (even without parallelism) and publicly. In order to avoid exponential (processors)
 adversary $T=2^{o(\lambda)}$ at most.

 Along with its introduction, a candidate $\vdf$ using injective rational maps has been proposed in~\cite{Dan2018VDF}.
 However, its prover needs a certain amount of parallelism to evaluate.
 Wesolowski~\cite{Wesolowski2019Efficient} and Pietrzak~\cite{Pietrzak2019Simple} come up with two 
 $\vdf$s separately, although based on the same hardness assumption of time-lock puzzle
 in the group of unknown order~\cite{Rivest1996Time}. Feo et al.~\cite{Feo2019Isogenie} propose a $\vdf$ based 
 on super-singular elliptic curves defined over finite fields. The verifiers in all of
 these $\vdf$s incur $\Omega{\lambda,\log{T}}$ sequential time in verification.


Therefore, an intriguing question would be is it possible to 
design a $\vdf$ that is verifiable in \emph{sequential} time independent of $\lambda$ and $T$. 

\subsection{Technical Overview}\label{contributions}

We show that it is possible to have $\vdf$ 
from polynomially-hard sequential function i.e., the delay parameter $T\le \poly(\lambda)$.
The prior works do not enjoy faster verification because of their hardness assumptions 
used in the sequentiality are essentially sub-exponentially hard.

First we find that time-lock puzzle over the group of \emph{known} orders can also be sequentially but
polynomially hard. Then, we propose a $\vdf$ that needs only a single sequential squaring for verification.
Thus the verification effort (sequential time) is independent of the security parameter $\lambda$ and 
the delay parameter $T$.

%

Briefly, our scheme works as follows. Depending upon $\lambda$ and $T$, it chooses 
a random oracle $\RO:\X\rightarrow \QR^*_q$ such that $\QR^*_q$ is the semigroup of
squares except $1$ in a finite field $\F_q$ of order $q$.
Then, the prover is asked to find the square root of an arbitrary element in $\F^*_q$. 
Given the input $x\in\mathcal{X}$, the prover needs to compute
$\RO(x)^{\frac{q+1}{4}}\in\F^*_q$. 
During verification, the verifier accepts if and only if 
$y^2=\RO(x)$.

We show that our construction is correct, sound and sequential. 
Further, our $\vdf$ needs no proof, thus is only a one round protocol.
We summarize a bunch of other advantages and also a disadvantage 
of our $\vdf$ over the existing ones in Sect.~\ref{compare}.
   

\subsection{Organization of the Paper}
 This paper is organized as follows.
Section~\ref{literature} discusses a few existing schemes known to be  
$\vdf$. In Section~\ref{preliminaries}, we present a succinct review of
$\vdf$. 
We propose our single squaring verifiable delay function in Section~\ref{LPoSW}.
In Section~\ref{properties} we establish the essential 
properties correctness, sequentiality and soundness of the $\vdf$.
Finally, Section~\ref{dis} concludes the paper. 

\section{Related Work}\label{literature}
In this section, we mention some well-known schemes qualified as $\vdf$s, and summarize
their features in Table~\ref{tab : VDF} in Sect.~\ref{compare}. We categorize them by
the sequentiality assumptions.


\paragraph{Injective Rational Maps}
In 2018, Boneh et al.~\cite{Dan2018VDF} propose a $\vdf$ based on injective rational
maps of degree $T$, where the fastest possible inversion is to compute the polynomial
GCD of degree-$T$ polynomials. They conjecture that 
it achieves $(T^2,o(T))$ sequentiality using permutation polynomials as the candidate map.
However, it is a weak $\vdf$ as it needs $\mathcal{O}(T)$ processors to evaluate the output in time $T$.

\paragraph{Time-lock Puzzle}\label{vdf}
Rivest, Shamir and Wagner introduced the time-lock puzzle stating
that it needs at least $T$ number of sequential squaring to compute 
$y=g^{2^T}\mod{\Delta}$ when the factorization of $\Delta$ is unknown~\cite{Rivest1996Time}.
Therefore they proposed this encryption that can be decrypted 
only sequentially. Starting with $\Delta=pq$ such that $p,q$ are large primes,
the key $y$ is enumerated as $y=g^{2^T}\mod{\Delta}$. Then the verifier,
uses the value of $\phi(\Delta)$ to reduce the exponent to
$e=2^T\mod{\phi(\Delta)}$ and finds out $y= g^e\mod{\Delta}$.
On the contrary, without the knowledge of $\phi(\Delta)$, the only option available to the prover
is to raise $g$ to the power $2^T$ sequentially. 
As the verification stands upon a secret, the knowledge of $\phi(\Delta)$, 
it is not a $\vdf$ as verification should depend only on public parameters.

Pietrzak~\cite{Pietrzak2019Simple} and Wesolowski~\cite{Wesolowski2019Efficient} circumvent 
this issue independently. We describe both the $\vdf$s in the generic group
$\mathbb{G}$ as done in~\cite{VDF2018Survey}.
These protocols can be instantiated over two different groups --
the RSA group $\multgroup{N}$ and the class group of imaginary quadratic
number field $Cl(d)$. Both the protocols use a common random oracle
$\RO_\mathbb{G}:\{0,1\}^*\rightarrow \mathbb{G}$ to map the input statement $x$ to the
generic group $\mathbb{G}$. We assume $g:=\RO_\mathbb{G}(x)$. The sequentiality of both
the protocols are shown by the reductions from the time-lock puzzle. For soundness they
use different hardness assumptions in the group $\G$.

\paragraph{Pietrzak's $\vdf$}
Pietrzak's $\vdf$ exploits the identity $(g^{r\cdot {2^{T/2}}} \times g^{2^T})=
(g^r\times g^{2^{T/2}})^{2^{T/2}}$ for any random integer $r$. The prover $\prv$ needs to compute
$y=g^{2^T}$ and the verifier $\vrf$ checks it by sampling a random $r$ in that identity.
In the non-interactive setting, the integer $r \in \Z_{2^\lambda}$ is sampled by another
random oracle $\RO:\Z \times \G \times \G \times \G  \rightarrow \Z_{2^\lambda}$.
The above-mentioned identity relates all $v=g^{r\cdot {2^{T/2}}} \times g^{2^T}$ and 
$u=g^r\times g^{2^{T/2}}$ for any $r\in\Z^+$. So, $\vrf$ engages $\prv$ in proving the
identity for $T/2, T/4, \ldots, 1$  in $\log T$ number of rounds. In each round $i$, a new
pair of $(u_i,v_i)$ is computed depending upon the integer $r_i$ as follows.

First, the prover $\prv$ initializes $u_1=g$ and $v_1=y$.
Then $\prv$ is asked to compute the output $y=g^{2^T}$ and the proof 
$\pi=\{z_1, z_2, \ldots, z_{\log{T}}\}$ such that $u_{i+1}=u_i^{r_i}\cdot z_i$ and $v_{i+1}=z_i^{r_i}\cdot v_i$ where 
$z_i=u_i^{2^{T/2^i}}$ and $r_i = \RO(u_i, T/2^{i-1},v_i,z_i)$. 
The effort to generate the proof $\pi$ is in $\O(\sqrt{T} \log{T})$.
The verifier $\vrf$ reconstructs $u_{i+1}=u_i^{r_i}\cdot z_i$ and $v_{i+1}=z_i^{r_i}\cdot v_i$
by fixing $u_1$,  $v_1$ and $r_i$s as before. 
Finally, $\vrf$ accepts the proof if and only if $v_{\log{T}}=u_{\log{T}}^2$.
As $r_{i+1}$ depends on $r_i$ they need to be computed sequentially.
So, doing at most $\log r_i$ squaring in each round, the verifier performs
$\sum_{i=1}^{\log{T}}\log{r_i}=\O(\lambda\log{T})$ squaring in total.
Sect.~\ref{minseq} presents a detailed inspection on the lower bound of verification. 

The soundness of this protocol is based on the low order 
assumption in the group $\G$. It assumes that  
finding an element with the order $<2^\lambda$ is computationally hard. A crucial
feature that comes from this assumption is that the statistical soundness. This $\vdf$
stands sound even against computationally unbounded adversary. The probability that this
$\vdf$ accepts a false proof produced by even a computationally unbounded adversary is
negligible in the security parameter $\lambda$. 

\paragraph{Wesolowski's $\vdf$}

It has been designed using the identity $g^{2^T}=(g^q)^\ell \times g^r$ where
$2^T=q\ell +r$. Here $\ell$ is a prime chosen uniformly at random from the set of first
$2^{2\lambda}$ primes $\PRIMES_{2\lambda}$. 

This protocol asks the prover $\prv$ to compute the output $y=g^{2^T}$. Then, the
verifier $\vrf$ chooses the prime $\ell$ from the set $\PRIMES_{2\lambda}$. In the
non-interactive version, the prime 
$\ell=\RO_{\texttt{prime}}(\texttt{bin}(g)|||\texttt{bin}(y))$ is
sampled by the random oracle $\RO_{\texttt{prime}}: \G \times \G \rightarrow
\PRIMES_{2\lambda}$. Now, $\prv$ needs to compute the proof $\pi=g^q$ where
the quotient $q={\lfloor2^T/\ell\rfloor}$. The verifier $\vrf$ finds the remainder
$r=2^T \mod {\ell}$. Then $\vrf$ checks if $y=\pi^\ell \cdot g^r$. This holds true since
$2^T=q \ell + r$ so $g^{2^T}=(g^q)^\ell \times g^r$. 

It has been shown in~\cite{Wesolowski2019Efficient}, that $\prv$ needs $\O(T/\log {T})$
time to compute the proof $\pi$. On the other hand, the verifier $\vrf$ needs to compute
$\pi^\ell$ and $g^r$. As $r < \ell$, the verification needs at most $2\log \ell$ squaring.
Since, $\ell$ is a prime of size at most $2\lambda\log{2\lambda}$ (by prime number
theorem) the verification amounts to at most $2\log{\ell} = \O(\lambda\log{\lambda})$ squaring.
In Sect.~\ref{minseq}, we review the lower bound of its verification. 

For the soundness, Wesolowski's $\vdf$ relies on the adaptive root assumption in group
$\G$. It states that given a prime $\ell$ and an arbitrary element in the group $\G$ 
it is computationally hard to find the $\ell$-th root of that element.

\paragraph{Continuous $\vdf$}
In a beautiful adaption of Pietrzak's construction, Ephraim et al. introduce a unique
feature called continuous verifiability into the world of $\vdf$s. 
Any intermediate state at time $T'<T$ of their $\vdf$ can be continuously
verified in $\O(\lambda)$ time~\cite{Ephraim20VDF}.  

\paragraph{Isogenies over Super-Singular Curves}
Feo et al. presents two $\vdf$s based on isogenies over super-singular
elliptic curves~\cite{Feo2019Isogenie}. They start with five groups $\langle
\G_1,\G_2,\G_3,\G_4,\G_5\rangle$
of prime order $T$ with two non-degenerate bilinear pairing maps
$e_{12}: \G_1 \times \G_2 \rightarrow \G_5$ and $e_{34}: \G_3 \times \G_4 \rightarrow \G_5$.
Also there are two group isomorphisms
$\phi: \G_1 \rightarrow \G_3$ and $\overline{\phi}: \G_4 \rightarrow \G_2$. 
Given all the above descriptions as the public parameters along with a generator $P\in \G_1$,
the prover needs to find $\overline{\phi}(Q)$, where $Q\in \G_4$, using $T$ sequential steps.
The verifier checks if $e_{12}(P,\overline{\phi}(Q))=e_{34}(\phi(P),Q)$ in
$\poly(\log{T})$ time. It runs on super-singular curves over $\mathbb{F}_p$ and $\mathbb{F}_{p^2}$ 
as two candidate groups.

\section{Preliminaries}\label{preliminaries}
Now we mention the notations and terminology used in this paper.

\subsection{Notation}

We denote the security parameter with $\lambda\in\mathbb{Z}^+$.
The term $\poly(\lambda$) refers to some polynomial of $\lambda$, and
$\negl(\lambda$) represents some function $\lambda^{-\omega(1)}$.
If any randomized algorithm $\adv$ outputs $y$ on an input $x$, 
we write $y\xleftarrow{R}\adv(x)$. By $x\xleftarrow{\$}\mathcal{X}$,
we mean that $x$ is sampled uniformly at random from $\mathcal{X}$. 
For any set $\mathcal{X}$, $|\mathcal{X}|$ denotes 
the cardinality of the set $\mathcal{X}$, whereas
for everything else encoded as strings, 
$|x|=n$ denotes the bit-length of the string $x\in\{0,1\}^n$. 

We consider $\adv$ as efficient if it runs in 
probabilistic polynomial time (PPT). We assume (or believe) a problem to be hard 
if it is yet to have an efficient algorithm for that problem.
We say that an algorithm $\adv$ runs in parallel
time $T$ with $\Gamma$ processors if it can be implemented on a PRAM machine
with $\Gamma$ parallel processors running in time $T$.

\subsection{Square Roots in Finite Fields of Odd Characteristics}

In this section, we discuss the issues of finding square roots of an element in
a finite field. It will be required in the security proofs in Sect.~\ref{properties}. 

Any field $\F_q$ can be succinctly represented using the fact that $\F_q$ is always isomorphic
to the ring of polynomials $\F_p[x]/\langle f(x) \rangle$ where $p$ is a prime and $f(x)$
is an irreducible polynomial over $\F_p$ of degree $n$ such that $q=p^n$. Therefore, it
suffices to specify $p$ and $f(x)$ to describe $\F_q$. The multiplicative cyclic group
of $\F_q$ is denoted as $\F^*_q$ of order $q-1$.

Let $\QR_q$ be the set of squares in $\F^*_q$ i.e., $\QR_q=\{s^2 \mid \forall
s \in\F^*_q\}$ of order $(q-1)/2$.
We define the semigroup of non-trivial squares
as, $$\QR^*_q\stackrel{def}{=}\QR_q\setminus\{1\}.$$

\begin{lemma}\label{order}
In a finite field of odd order $q$,
\begin{enumerate}[label=\roman*.]
\item for all $a\in\F^*_q$, $a\in\QR_q$ if and only if $a^{\frac{q-1}{2}}=1$,
\item The order of $\QR_q$ is $|\QR_q|=\frac{q-1}{2}$. 
\end{enumerate} 
\end{lemma}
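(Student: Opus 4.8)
The plan is to prove both parts using the cyclic structure of $\F^*_q$. Since $\F^*_q$ is a cyclic group of order $q-1$, fix a generator $g$, so every $a\in\F^*_q$ can be written uniquely as $a=g^k$ with $0\le k<q-1$. Because $q$ is odd, $q-1$ is even, so $(q-1)/2$ is a well-defined integer and the map $s\mapsto s^2$ makes sense to analyze via exponents.

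For part (i), I would argue as follows. If $a=s^2$ for some $s\in\F^*_q$, then $a^{(q-1)/2}=s^{q-1}=1$ by Fermat's little theorem (Lagrange in $\F^*_q$), giving one direction immediately. Conversely, suppose $a^{(q-1)/2}=1$. Writing $a=g^k$, this says $g^{k(q-1)/2}=1$, hence $(q-1)\mid k(q-1)/2$, which forces $k$ to be even, say $k=2m$; then $a=(g^m)^2\in\QR_q$. Alternatively, and perhaps cleaner to present: the polynomial $X^{(q-1)/2}-1$ has at most $(q-1)/2$ roots in the field $\F_q$, and all $(q-1)/2$ squares are among them by the forward direction, so the squares are \emph{exactly} the roots — this simultaneously sets up part (ii).

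For part (ii), I would count the image of the squaring homomorphism $\psi:\F^*_q\to\F^*_q$, $\psi(s)=s^2$. Its kernel is $\{s:s^2=1\}=\{\pm1\}$, which has size $2$ precisely because the characteristic is odd (so $1\ne-1$). By the first isomorphism theorem, $|\QR_q|=|\mathrm{im}\,\psi|=|\F^*_q|/|\ker\psi|=(q-1)/2$. This is the whole of part (ii).

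I do not expect any serious obstacle here; the only points requiring a little care are (a) invoking that $q$ is odd exactly where it is needed — namely to know $q-1$ is even so that $(q-1)/2\in\Z$, and to know $-1\ne 1$ so the kernel has order $2$ — and (b) deciding whether to phrase the converse in part (i) via a generator or via the degree bound on $X^{(q-1)/2}-1$; I would likely use the generator argument for part (i) and the kernel/image argument for part (ii), as these are the most self-contained and avoid needing the polynomial root-count lemma as a black box.
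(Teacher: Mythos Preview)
Your proposal is correct; both parts go through cleanly, and you correctly isolate where the oddness of $q$ is actually used.

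Your route, however, differs from the paper's. For the converse in part~(i), the paper does not pick a generator: it argues (rather loosely) that if $a^{(q-1)/2}=1$ then ``$a^{1/2}$ must be a root of $x^{q-1}=1$'' and hence lies in $\F^*_q$ by Lagrange. For part~(ii), the paper does not use the squaring homomorphism and the first isomorphism theorem; instead it reasons that $|\QR_q|$ is a multiple of $(q-1)/2$ (via orders of elements) and that $\QR_q\subsetneq\F^*_q$ (since some $x$ satisfies $x^{(q-1)/2}=-1$), forcing $|\QR_q|=(q-1)/2$. Your generator argument for~(i) and kernel--image count for~(ii) are the standard textbook route: fully rigorous, self-contained, and explicit about where $1\ne-1$ enters. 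The paper's version is terser but leans on an informal ``$a^{1/2}$'' and leaves more to the reader. Interestingly, your \emph{alternative} for part~(i) --- counting roots of $X^{(q-1)/2}-1$ --- is in spirit closer to the paper's opening remark that $x^{q-1}-1$ has exactly $q-1$ roots; had you taken that line, the two proofs would have converged more.
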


\begin{proof}
By Lagrange's theorem, all elements in $\F^*_q$ satifies the equation $x^{q-1}=1$. 
It means $x^{\frac{q-1}{2}}=\pm 1$ holds true for all elements in $\F^*_q$. 
Therefore, the polynomial $x^{q-1}-1$ has exactly $(q-1)$ roots over $\F^*_q$.

\begin{description}

\item [Part $i.$]
If $a\in\QR_q$ then for some $x\in\F^*_q$, $a=x^2$. Then, $a^{\frac{q-1}{2}}=x^{q-1}=1$. 
Conversely, if $a^{\frac{q-1}{2}}=1$ then $a^{\frac{1}{2}}$ must be a root of the
equation $x^{q-1}=1$. Then, by Lagrange's theorem, 
$a^\frac{1}{2}\in \F^*_q$, so $a\in\QR_q$. 

\item [Part $ii.$]
Since, the orders of all the elements divide the order of the group,
$|\QR_q|=k\cdot\frac{(q-1)}{2}$ for some $k\in\Z^+$. Since, $x^{\frac{q-1}{2}}=-1$ for some
$x\in\F^*_q$, we have $\QR_q\subset\F^*_q$, so
$|\QR_q|< |\F^*_q|$. Both together imply that $|\QR_q|=\frac{(q-1)}{2}$.
\end{description}
\qed
\end{proof}

\begin{lemma}\label{member}
In a finite field of odd order $q$, $-1\notin\QR_q$ if and only if $q=3\mod{4}$. 
\end{lemma}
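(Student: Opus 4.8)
The plan is to apply the Euler-type criterion of Lemma~\ref{order}.i to the single element $a=-1$, which lies in $\F^*_q$ since $-1\neq 0$, and then reduce everything to a parity computation. Since $q$ is odd, put $m=\frac{q-1}{2}\in\Z^+$; Lemma~\ref{order}.i then states that $-1\in\QR_q$ if and only if $(-1)^m=1$ in $\F_q$.

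The next step is to evaluate $(-1)^m$. Elementarily $(-1)^m$ equals $1$ when $m$ is even and $-1$ when $m$ is odd, and here the hypothesis that $q$ is \emph{odd} is exactly what makes these two outcomes distinct: it forces $\mathrm{char}(\F_q)\neq 2$, hence $1\neq -1$ in $\F_q$. Consequently $(-1)^m=1 \iff m$ is even.

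Finally I would translate the parity of $m=\frac{q-1}{2}$ back into a congruence on $q$. Being odd, $q$ satisfies exactly one of $q\equiv 1\pmod 4$ or $q\equiv 3\pmod 4$; in the former case $4\mid q-1$, so $m$ is even, and in the latter case $q-1\equiv 2\pmod 4$, so $m$ is odd. Chaining the three equivalences gives $-1\in\QR_q \iff (-1)^m=1 \iff m \text{ even} \iff q\equiv 1\pmod 4$, and taking the contrapositive over these two exhaustive and mutually exclusive alternatives yields $-1\notin\QR_q \iff q\equiv 3\pmod 4$, as claimed.

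I do not anticipate a genuine obstacle: the argument is a short chain of equivalences once Lemma~\ref{order}.i is available. The only place demanding care is the use of the oddness of $q$ to guarantee $1\neq -1$; without it the step $(-1)^m=1\iff m$ even would collapse (as it does in characteristic $2$), so this hypothesis must be invoked explicitly rather than used tacitly.
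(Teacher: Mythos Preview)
Your proposal is correct and follows essentially the same approach as the paper: apply the criterion of Lemma~\ref{order}.i to $a=-1$ and reduce the question to the parity of $\frac{q-1}{2}$. Your write-up is in fact more careful than the paper's, which handles both directions in two terse lines and leaves implicit the step $1\neq -1$ that you rightly flag as requiring the oddness hypothesis.
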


\begin{proof}
If $q=3\mod{4}$ then $\frac{q-1}{2}=1 \mod{2}$. Therefore,
$-1^{\frac{q-1}{2}}=-1\notin\QR_q$. Conversely, if $-1\notin\QR_q$ 
then $\frac{q-1}{2}=1 \mod{2}$. Hence, $q=3\mod{4}$. 
\qed
\end{proof}

The condition $q=3\mod{4}$ can
also be satisfied efficiently by the following lemma,

\begin{lemma}\label{prime}
For all $q,n\in\Z^+$ and primes $p$, the condition $q=p^n =3 \mod{4}$ holds true 
if and only if $p=3\mod{4}$ and $n$ is odd. 
\end{lemma}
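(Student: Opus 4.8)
The plan is to prove Lemma~\ref{prime} by reducing the statement about $q = p^n \bmod 4$ to elementary facts about powers of residues modulo $4$. Since $p$ is a prime and $q = p^n = 3 \bmod 4$ forces $q$ to be odd, $p$ itself must be odd, so $p \equiv 1 \bmod 4$ or $p \equiv 3 \bmod 4$. The whole argument then amounts to tracking what happens to $p^n \bmod 4$ in each of these two cases as $n$ varies, which is a short case analysis on the value of $n \bmod 2$.

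Concretely, first I would handle the forward direction: assume $q = p^n = 3 \bmod 4$. As noted, $p$ must be odd. If $p \equiv 1 \bmod 4$, then $p^n \equiv 1^n = 1 \bmod 4$ for every $n$, contradicting $p^n \equiv 3 \bmod 4$; hence $p \equiv 3 \bmod 4$. With $p \equiv 3 \equiv -1 \bmod 4$, we get $p^n \equiv (-1)^n \bmod 4$, which equals $1$ when $n$ is even and $3 \equiv -1$ when $n$ is odd. Since $p^n \equiv 3 \bmod 4$, $n$ must be odd. For the converse, assume $p \equiv 3 \bmod 4$ and $n$ odd; then $p^n \equiv (-1)^n = -1 \equiv 3 \bmod 4$, which is exactly $q = 3 \bmod 4$.

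There is really no main obstacle here — the lemma is a routine congruence computation — so the only thing to be careful about is making the case split exhaustive and explicitly invoking that $q = 3 \bmod 4$ implies $q$ is odd, which in turn implies $p$ is odd (the case $p = 2$ gives $q = 2^n$, which is $0$ or $2 \bmod 4$, never $3$). I would present the proof as two short paragraphs, one for each direction, using the substitution $p \equiv -1 \bmod 4$ to make the parity dependence on $n$ transparent via $(-1)^n$.

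\begin{proof}
Suppose $q = p^n = 3 \bmod 4$. Then $q$ is odd, so $p \neq 2$, i.e. $p$ is an odd prime, and hence $p \equiv 1 \bmod 4$ or $p \equiv 3 \bmod 4$. If $p \equiv 1 \bmod 4$, then $p^n \equiv 1 \bmod 4$ for every $n \in \Z^+$, contradicting $p^n = 3 \bmod 4$; therefore $p \equiv 3 \equiv -1 \bmod 4$. Consequently $p^n \equiv (-1)^n \bmod 4$, which is $1$ if $n$ is even and $3$ if $n$ is odd. Since $p^n = 3 \bmod 4$, $n$ must be odd.

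Conversely, suppose $p = 3 \bmod 4$ and $n$ is odd. Writing $p \equiv -1 \bmod 4$, we obtain $q = p^n \equiv (-1)^n = -1 \equiv 3 \bmod 4$.
\qed
\end{proof}
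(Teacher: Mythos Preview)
Your proof is correct and essentially the same as the paper's: both reduce the question to the parity of $n$ once $p$ is known to be odd. The paper phrases this step via Euler's theorem, reducing the exponent modulo $\phi(4)=2$, while you phrase it via the substitution $p \equiv -1 \bmod 4$ and the computation of $(-1)^n$; the underlying argument is identical, and your version is slightly more explicit in ruling out $p=2$.
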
 

\begin{proof}
The condition is, 
\begin{align*}
q=p^n &=3\mod{4}\\
p^{n \mod{\phi(4})} &= 3\mod{4}\\
p^{n \mod{2}} &= 3\mod{4}.
\end{align*}

The left hand side yields $3\mod{4}$ if and only if $p=3\mod{4}$ and $n$ is odd.
\qed
\end{proof}

\begin{lemma}\label{sroot}
In a finite field of odd order $q=3\mod{4}$, for all $a\in\QR^*_q$,
square root of $a$ is computable in $\O(\log{q})$ squaring as
$a^{\frac{q+1}{4}}=a^\frac{1}{2}\in\F^*_q$.  
\end{lemma}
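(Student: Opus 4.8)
The plan is to check directly that the proposed exponent yields a square root, and then to bound the cost of the modular exponentiation. First I would record that the hypothesis $q = 3 \mod 4$ forces $4 \mid (q+1)$, so that $\frac{q+1}{4}$ is a well-defined positive integer; this is precisely the place where the congruence condition on $q$ is needed (and, by Lemmas~\ref{member} and~\ref{prime}, it is efficiently attainable). Without this condition the formula $a^{\frac{q+1}{4}}$ would be meaningless, so I would state this observation explicitly rather than leave it implicit.

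Next I would compute the square of the candidate and simplify the exponent: $\left(a^{\frac{q+1}{4}}\right)^2 = a^{\frac{q+1}{2}} = a^{\frac{q-1}{2}} \cdot a$. Since $a \in \QR^*_q \subseteq \QR_q$, Lemma~\ref{order}(i) gives $a^{\frac{q-1}{2}} = 1$, hence $\left(a^{\frac{q+1}{4}}\right)^2 = a$. Because $\F_q$ is a field and $a \neq 0$, the element $a^{\frac{q+1}{4}}$ is nonzero, i.e.\ it lies in $\F^*_q$, so it is a genuine square root of $a$ inside the multiplicative group, as claimed.

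For the complexity statement I would invoke the standard square-and-multiply (repeated squaring) algorithm: evaluating $a^{\frac{q+1}{4}}$ takes $\O\!\left(\log \frac{q+1}{4}\right) = \O(\log q)$ squarings together with at most that many multiplications in $\F_q$, which is the asserted bound. I do not expect a substantive obstacle here; the only subtlety worth handling with care is the integrality of $\frac{q+1}{4}$, and the rest is a one-line exponent manipulation combined with Lemma~\ref{order}.
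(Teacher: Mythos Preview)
Your proposal is correct and follows essentially the same route as the paper: both check that $\frac{q+1}{4}\in\Z^+$ from $q\equiv 3\pmod 4$, use Lemma~\ref{order}(i) to obtain $a^{(q+1)/2}=a$ (you make the factorization $a^{(q-1)/2}\cdot a$ explicit), and then bound the exponentiation by $\O(\log q)$ squarings. The only cosmetic difference is that the paper records the exact count $\lceil\log(q+1)\rceil-2$ rather than just the asymptotic bound.
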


\begin{proof}
The exponent $\frac{q+1}{4}\in\Z^+$ if and only if $q=3\mod{4}$. 
By lemma~\ref{order}, $a^{\frac{q+1}{2}}=a\in\QR_q$. So,
$a^{\frac{q+1}{4}}=a^\frac{1}{2}\in\F^*_q$. It takes $(\lceil\log{(q+1)}\rceil-2)$ 
squaring in $\F^*_q$ to compute $a^{\frac{q+1}{4}}$.  
\qed
\end{proof}

\begin{assumption}\label{bound}
{\normalfont \textbf{(Lower bound on finding square roots in $\F^*_q$)}}
In a finite field of odd order $q$, for all $a\in\QR_q$, computing the square 
root of $a$ takes $\Omega(\log{q})$ time.
\end{assumption}

\begin{proof}
We consider the algorithm by Doliskani and Schost to be fastest algorithm to find the
square roots in sufficiently large finite fields~\cite{Doliskani14}. The proposition 
3.2 in cf.~\cite{Doliskani14} shows that their algorithm runs in time
$\O(sM(n)\log{p}+C(n)\log{n})$ for $q=p^n$ where $M(n)=n\log{n}\log{\log{n}}$ and
$C(n)=\O(n^{1.67})$ denote the time for multiplication and composition of polynomials.
For square roots $s$ turns out to be $1$. Therefore, it takes
$\Omega(n\log{p})=\Omega(\log{q})$-time.
\end{proof}

\subsection{Verifiable Delay Function}\label{VDF}
We borrow this formalization from~\cite{Dan2018VDF}.

\begin{definition}
{\normalfont{ \bf (Verifiable Delay Function).}}
A verifiable delay function from domain $\X$ to range $\Y$ is a tuple of algorithms 
$(\setup, \eval, \Verify)$ defined as follows,
\begin{itemize}[label=\textbullet]
\item $\setup(1^\lambda, T) \rightarrow \mathbf{pp}$
 is a randomized algorithm that takes as input a security parameter $\lambda$ 
 and a delay parameter $T$, and produces the public parameters 
 $\pp$. We require $\setup$ to run in $\poly(\lambda,\log{T})$-time.
 
 \item $\eval(\pp, x) \rightarrow (y, \pi)$ takes an input 
 $x\in\mathcal{X}$, and produces an output $y\in\mathcal{Y}$ and a (possibly empty) 
 proof $\pi$. $\eval$ may use random bits to generate the proof 
 $\pi$. For all $\mathbf{pp}$ generated 
 by $\setup(1^\lambda, T)$ and all $x\in\mathcal{X}$, the algorithm 
 $\eval(\mathbf{pp}, x)$ must run in time $T$. 
 
 \item $\Verify(\pp, x, y, \pi) \rightarrow \{0, 1\}$ is a 
 deterministic algorithm that takes an input $x\in\mathcal{X}$, an output $y\in\mathcal{Y}$,
 and a proof $\pi$ (if any), and either accepts $(1)$ or rejects $(0)$. 
 The algorithm must run in $\poly(\lambda,\log{T})$ time.
\end{itemize}
\end{definition}

Before we proceed to the security of $\vdf$s we need the precise model of parallel
adversaries~\cite{Dan2018VDF}. 
\begin{definition}{\normalfont{(\bf Parallel Adversary)}}\label{paradv} 
A parallel adversary $\adv=(\adv_0,\adv_1)$ is a pair of non-uniform 
randomized algorithms $\adv_0$ with total running time $\poly(\lambda,T)$, 
and $\adv_1$ which runs in parallel time $\sigma(T)$ on at 
most $\poly(\lambda,T)$ number of processors.
\end{definition}
Here, $\adv_0$ is a preprocessing algorithm that precomputes some
$\st$ based only on the public parameters, and $\adv_1$ exploits
this additional knowledge to solve in parallel running time $\sigma$ on 
$\poly(\lambda,T)$ processors.

Three essential security properties of $\vdf$s are now described.

\begin{definition}{\normalfont{(\bf Correctness)}}\label{def: Correctness} 
A $\vdf$ is correct if for all $\lambda, T\in\Z^+$, and 
$x\in\mathcal{X}$, we have,
\[
\Pr\left[
\begin{array}{l}
\Verify(\mathbf{pp},x,y,\pi)=1
\end{array}
\Biggm| \begin{array}{l}
\mathbf{pp}\leftarrow\setup(1^\lambda,T)\\
x\xleftarrow{\$} \mathcal{X}\\
(y,\pi)\leftarrow \eval(\mathbf{pp},x)
\end{array}
\right]
= 1.
\]
\end{definition}

\begin{definition}{\normalfont{\bf(Soundness)}}\label{def: Soundness} 
A $\vdf$ is computationally sound if for all non-uniform algorithms $\adv$ 
that run in time $\poly(\lambda,T)$,
we have,
\[
\Pr\left[
\begin{array}{l}
y\ne\eval(\mathbf{pp},x)\\
\Verify(\mathbf{pp},x,y,\pi)=1
\end{array}
\Biggm| \begin{array}{l}
\mathbf{pp}\leftarrow\setup(1^\lambda,T)\\
(x,y,\pi)\leftarrow\adv(1^\lambda,1^T,\mathbf{pp})
\end{array}
\right] \le \negl(\lambda).
\]
\end{definition}

Further, a $\vdf$ is called statistically sound when all adversaries 
(even computationally unbounded) have at most $\negl(\lambda)$ advantage.
Even further, it is called perfectly sound if we want this probability to be $0$ 
against all adversaries. Hence, perfect soundness implies statistical soundness which
implies computational soundness but not the reverses. 

\begin{definition}{\normalfont{\bf (Sequentiality)}}\label{def: Sequentiality}
A $\vdf$ is $(\Gamma,\sigma)$-sequential if there exists no
pair of randomized algorithms $\adv_0$ with total running time
$\poly(\lambda,T)$ and $\adv_1$ which runs
in parallel time $\sigma(T)<T$ on at most $\Gamma$ processors, such that
\[
\Pr\left[
\begin{array}{l}
y=\eval(\mathbf{pp},x)
\end{array}
\Biggm| \begin{array}{l}
\mathbf{pp}\leftarrow\setup(1^\lambda,T)\\
\st\leftarrow\adv_0(1^\lambda,T,\mathbf{pp})\\
x\xleftarrow{\$}\mathcal{X}\\
y\leftarrow\adv_1(\st,x)
\end{array}
\right]
\le \negl(\lambda).
\]
\end{definition}
$\sigma(T)=T$ is impossible as $\eval$ runs in time $T$. In all practical applications 
it suffices to attain $\sigma(T)= (1-\epsilon) T$ for sufficiently small $\epsilon$ 
while an almost-perfect $\vdf$ would achieve $\sigma(T)=T-o(T)$~\cite{Dan2018VDF}.

\section{Single Squaring Verifiable Delay Function}\label{LPoSW}
%
%

As before, $\lambda\in\mathbb{Z}^+$ denotes the security parameter, 
$T\in 2^{o(\lambda)}$ denotes the delay parameter.
The three algorithms that specify our $\vdf$ are,

\subsection{The $\setup(1^\lambda,T)$ Algorithm}\label{Gen}
This algorithm outputs the public parameters
$\mathbf{pp}=\langle \F_q, \RO \rangle$ having
the following meanings.
\begin{enumerate}

\item $\F_q$ is a finite field of order $q\ge 2^\lambda$ such that $q=3 \mod{4}$.
We denote $\F^*_q$ as the multiplicative group of $\F_q$. It is a cyclic group of order
$q-1$.

\item 

We take $\RO:\X\rightarrow\QR^*_q$ to 
be a random oracle that maps an input statement $x\in\mathcal{X}$ to an element in $\QR^*_q$.
\end{enumerate}

  None of the public parameters needs to be computed. The cost of $\setup$ has been
analyzed in Sect.~\ref{Efficiency}.

\subsection{The $\eval$ Algorithm}
The prover $\prv$ needs to compute one of the square roots of the element $\RO(x)$
from the $\QR^*_q$. In particular, $\prv$ executes,

\begin{algorithm}
\caption{$\eval(\pp,x,T)\rightarrow y$}
\begin{algorithmic}[1]
 \State compute $g:=\RO(x)\in\QR^*_q$.
 \State compute $y:=g^{\frac{q+1}{4}}\in\F^*_q$. 
 \Comment{By lemma~\ref{sroot}, $y$ is a square root of $s$ over $\F^*_q$ .}\\

 \Return $y$
\end{algorithmic}
\end{algorithm}
 $\prv$ announces the triple $( x, T, y)$. 

\subsection{The $\Verify$ Algorithm}
The verifier $\vrf$ only checks if $y^2=\RO(x)$. 
So, $\vrf$ runs the following algorithm,

\begin{algorithm}
\caption{$\Verify(\pp,x,T, y,\bot)\rightarrow \{0,1\}$}
\begin{algorithmic}[1]
 \State compute $g:=\RO(x)\in\QR^*_q$.

 \If {$y^2 = g$}
	\State \Return $1$
 \Else	
	\State \Return $0$
 \EndIf	
\end{algorithmic}
\end{algorithm}

Verification needs no proof, so $\pi=\bot$.


\section{Security Analysis}\label{properties}
Here we show that the proposed $\vdf$ is correct, sound and sequential.

\subsection{Correctness}
The verifier should always accept a valid triple $(x,y,T)$.

\begin{theorem}
 The single squaring $\vdf$ is correct.
\end{theorem}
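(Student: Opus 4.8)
The plan is to verify directly from the definitions that the probability in Definition~\ref{def: Correctness} equals $1$, i.e.\ that every honestly-generated triple is accepted. First I would unwind the three algorithms: $\setup$ produces $\pp=\langle\F_q,\RO\rangle$ with $q\equiv 3\bmod 4$ and $q\ge 2^\lambda$; on input $x$, $\eval$ sets $g:=\RO(x)\in\QR^*_q$ and outputs $y:=g^{\frac{q+1}{4}}$; and $\Verify$ recomputes $g:=\RO(x)$ (the same value, since $\RO$ is a fixed function) and tests whether $y^2=g$. So the only thing to establish is the algebraic identity $\bigl(g^{\frac{q+1}{4}}\bigr)^2=g$ for every $g\in\QR^*_q$.

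Next I would invoke the earlier lemmas to discharge this identity. Since $q\equiv 3\bmod 4$, Lemma~\ref{prime} (or simply the hypothesis on $q$) guarantees $\frac{q+1}{4}\in\Z^+$, so the exponentiation in $\eval$ is well-defined. Then $\bigl(g^{\frac{q+1}{4}}\bigr)^2=g^{\frac{q+1}{2}}=g\cdot g^{\frac{q-1}{2}}$, and because $g\in\QR^*_q\subseteq\QR_q$, Lemma~\ref{order}(i) gives $g^{\frac{q-1}{2}}=1$, whence $y^2=g$. This is exactly the content of Lemma~\ref{sroot}, so in fact I can cite Lemma~\ref{sroot} directly to conclude that $y=g^{\frac{q+1}{4}}$ satisfies $y^2=g=\RO(x)$. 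One should also note that $\eval$ runs in time $T$ and $\Verify$ in $\poly(\lambda,\log T)$ time, but these are structural facts about the construction (addressed in Sect.~\ref{Efficiency}) rather than part of the correctness probability, so I would mention them only in passing.

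Finally I would assemble these observations: for any $\lambda,T\in\Z^+$ and any $x\in\X$ (hence for a uniformly random $x\xleftarrow{\$}\X$), the verifier's check $y^2=\RO(x)$ holds with certainty, so the probability in Definition~\ref{def: Correctness} is $1$, and the $\vdf$ is correct. I do not expect any real obstacle here — the proof is essentially a one-line consequence of Lemma~\ref{sroot}; the only point requiring a moment's care is making sure the $g$ recomputed by $\Verify$ is literally the same group element the prover squared, which is immediate because $\RO$ is a deterministic function fixed by $\pp$.
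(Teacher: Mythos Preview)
Your proposal is correct and follows essentially the same route as the paper: the paper's proof also notes that $\RO$ determines $g$ uniquely and then invokes Lemma~\ref{sroot} to conclude $y^2=\RO(x)$, so $\Verify$ accepts. Your version is simply a more explicit unpacking of the same two ingredients (determinism of $\RO$ and Lemma~\ref{sroot}).
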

\begin{proof}
 Since $\RO$ is a random oracle, $g\in\QR^*_q$ is uniquely determined
 by the challenge $x\in\X$. If the prover $\prv$ finds the square 
 root of $\RO(x)$ then the verifier $\vrf$ accepts the tuple $(\pp,x,y,T)$ as 
 $y^2=\RO(x)$ by lemma~\ref{sroot}.
\qed 
\end{proof}

\subsection{Soundness}
 Using an invalid proof no adversary $\adv$ having $\poly(T)$ processors should convince the verifier 
 with non-negligible probability.

\begin{definition}{\normalfont \textbf{(Square Root Game $\sqrt{\mathcal{G}}_\F$)}}\label{root} 
Let $\adv$ be a party playing the game. 
The square root finding game $\sqrt{\mathcal{G}}_{\F}$ goes as follows: 
\begin{enumerate}
 \item $\adv$ is given a composite integer $\F_q$.
 \item $\adv$ is given an element $z\xleftarrow{\$}\QR^*_q$ chosen uniformly at random.
 \item Observing $z$, $\adv$ outputs an element $w\in\F^*_q$.
\end{enumerate}
The player $\adv$ wins the game $\sqrt{\mathcal{G}}_\F$ if $w^2=z\in\QR^*_q$.
\end{definition}

%
%
By lemma~\ref{order}, there exists exactly two square roots of each $z\in\QR^*_q$. 

However, we need another version of the game $\sqrt{\mathcal{G}}_\F$ 
in order to reduce the soundness-breaking game.
Instead of sampling $z$ uniformly at random, in this 
version we compute $\RO : \X \rightarrow z\in\QR^*_q$.

\begin{definition}
{\normalfont \textbf{(Square Root Oracle Game $\sqrt{\mathcal{G}}^\RO_\F$).}}\label{funcroot} 
Let $\adv$ be a party playing the game. Suppose, 
$\RO:\{0,1\}^* \rightarrow \QR^*_q$ is random oracle that  
always maps any element from its domain to a fixed
element chosen uniformly at random from its range,
with the probability $\frac{1}{|\QR^*_q|}$.
The Square Root Function Game $\sqrt{\mathcal{G}}^\RO_\F$ goes as follows: 
\begin{enumerate}
 \item $\adv$ is given the output of the 
 $\setup(1^\lambda,T)\rightarrow \langle \RO, \F_q \rangle $.
 \item An element $x\xleftarrow{\$}\X$ is chosen uniformly at random.
 \item $\adv$ is given the element $\RO(x)\rightarrow z\in\QR^*_q$.
 \item Observing $z$, $\adv$ outputs an element
$w\in\F^*_q$.
\end{enumerate}
The player $\adv$ wins the game $\sqrt{\mathcal{G}}_\F^\RO$ if $w^2=z\in\QR^*_q$.
\end{definition}

Under the assumption that $\RO$ samples elements from its range uniformly at random
the distributions of $z$ in 
 both the games $\sqrt{\mathcal{G}}_\F$ and $\sqrt{\mathcal{G}}^\RO_\F$ 
should be indistinguishable for $\adv$. Therefore, $\adv$ has exactly two solutions to 
win the game $\sqrt{\mathcal{G}}^\RO_\F$. 
%
%
%
%
%
%
%

\begin{theorem}{\normalfont \textbf{(Soundness).}}\label{thm: soundness}
 Suppose $\adv$ be a player who breaks the soundness of the single squaring $\vdf$
 with probability $p_{win}$. Then there is a player
 $\adf$ who wins the square root oracle game $\sqrt{\mathcal{G}}^\RO_\F$ 
 with the probability $p_{win}$. 
\end{theorem}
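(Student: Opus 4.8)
The plan is to prove this by a direct reduction that converts any soundness-breaking adversary $\adv$ for the single-squaring $\vdf$ into a player $\adf$ for the game $\sqrt{\mathcal{G}}^\RO_\F$, preserving the success probability exactly. The conceptual point is simple: in this $\vdf$ there is no proof ($\pi=\bot$), and $\Verify(\pp,x,T,y,\bot)=1$ holds precisely when $y^2=\RO(x)$ in $\QR^*_q$; meanwhile the unique correct $\eval$ output is \emph{a} square root of $\RO(x)$, and by Lemma~\ref{order} there are exactly two such square roots. So ``breaking soundness'' means producing $(x,y,\bot)$ with $y\ne\eval(\pp,x)$ yet $y^2=\RO(x)$ — which is exactly the winning condition of $\sqrt{\mathcal{G}}^\RO_\F$ (possibly landing on the ``other'' root, but that still satisfies $y^2=\RO(x)$).

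First I would set up $\adf$ as follows. On receiving its own challenge — the output of $\setup(1^\lambda,T)\to\langle\RO,\F_q\rangle$, a uniformly chosen $x\xleftarrow{\$}\X$, and the value $z:=\RO(x)\in\QR^*_q$ — the player $\adf$ runs $\adv(1^\lambda,1^T,\pp)$ with $\pp=\langle\F_q,\RO\rangle$, forwarding random-oracle queries honestly (they are answered by the same $\RO$). When $\adv$ halts with a triple $(x',y',\pi')$, $\adf$ ignores $\pi'$ (it must be $\bot$ for verification to have any meaning here), and outputs $w:=y'$. The reduction is syntactically and distributionally faithful: the $\pp$ that $\adv$ sees is identically distributed to what it expects, and the only subtlety is that $\adv$ is free to pick \emph{which} input $x'$ it attacks, whereas $\adf$'s winning condition is phrased for the specific challenge $x$ embedded by the game.

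The main obstacle — and the place the argument needs care — is precisely this mismatch between ``$\adv$ attacks an input of its choosing'' and ``$\adf$ must solve the square root for the challenge input $x$.'' The clean resolution is the one already flagged in the paragraph preceding the theorem: under the assumption that $\RO$ maps every domain point to an element of $\QR^*_q$ chosen uniformly and independently, the value $z'=\RO(x')$ that $\adv$ effectively attacks is distributed identically to a freshly sampled element of $\QR^*_q$, and hence identically to the challenge $z$ of $\sqrt{\mathcal{G}}^\RO_\F$ — so it is WLOG to assume $\adv$ attacks exactly the programmed point, i.e. $\adf$ can embed its challenge $z$ as the oracle answer on whichever point $\adv$ ends up querying/attacking. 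Concretely one either (i) programs $\RO$ lazily and plants $z$ on $\adv$'s (single, WLOG) relevant query, or (ii) argues that since $x$ is itself drawn uniformly and $\RO$ is a random function, conditioning on $x'=x$ loses nothing in distribution. Either way, whenever $\adv$ succeeds — event $y'\ne\eval(\pp,x')$ and $(y')^2=\RO(x')$ — the output $w=y'$ satisfies $w^2=z\in\QR^*_q$, so $\adf$ wins exactly when $\adv$ does.

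Putting it together: $\Pr[\adf \text{ wins } \sqrt{\mathcal{G}}^\RO_\F] = \Pr[\adv \text{ breaks soundness}] = p_{win}$, and since $\adv$ runs in time $\poly(\lambda,T)$ so does $\adf$ (the overhead is just relaying oracle queries and copying $y'$). I would close by remarking that combining this with Assumption~\ref{bound} (the $\Omega(\log q)$ lower bound on extracting square roots) yields computational soundness, since an adversary with only $\poly(T)$ processors and $q\ge 2^\lambda$ cannot win $\sqrt{\mathcal{G}}^\RO_\F$ with non-negligible probability — but the theorem as stated only asserts the reduction, so the proof ends at the probability equality.
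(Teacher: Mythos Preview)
Your reduction is the same idea as the paper's: run the soundness adversary, take its output $y$, and hand it to the square-root game. Two differences are worth noting. First, the paper's proof is terser and simply has $\adf$ ``call $\adv$ on the input statement $x\xleftarrow{\$}\mathcal{X}$,'' effectively treating the soundness adversary as if it receives a random $x$ rather than choosing its own --- so the input-mismatch you carefully flagged is sidestepped (or, less charitably, ignored) rather than resolved via oracle programming. Second, the paper's proof has $\adf$ ``compute $y^2$ from $y$ and output it,'' which appears to be a slip: outputting $y^2$ yields $z$ itself, not a square root of $z$. Your choice $w:=y'$ is the correct one. Your treatment of the programming/distributional point is somewhat informal (in the game as defined, $\adf$ receives $\RO$ from $\setup$ and cannot program it, so the lazy-planting argument would strictly require reformulating the game or losing a $1/Q$ factor), but it is already more rigorous than what the paper provides.
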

\begin{proof}
 Run the $\setup(1^\lambda,T)$ to obtain $\langle \RO,\F_q \rangle$. 
 The player $\adf$ calls $\adv$ on the input statement $x\xleftarrow{\$}\mathcal{X}$.
 Suppose $\adv$ outputs $y\ne\eval(\pp,x,y,T)\in \F^*_q$. 
 As $\adv$ breaks the soundness of this $\vdf$, $\Verify(\pp,x,y,T)=1$, so 
 $y^{2}=\RO(x)\in\F^*_q$ 
 with the probability $p_{win}$. So $\adf$ computes $y^2$ from $y$
 and outputs it to win the game $\sqrt{\mathcal{G}}^\RO_T$ 
 with probability $p_{win}$.
\qed 
\end{proof}

%

\subsection{Sequentiality}\label{Sequentiality}
The sequentiality analysis of our $\vdf$ scheme is based on 
the sequentiality of time-lock puzzle~\cite{Rivest1996Time}.

\begin{definition}{\normalfont(\textbf {Generalized Time-lock Game $\mathcal{G}^{T}_{\G})$}}
Let $\adv=(\adv_0,\adv_1)$ be a party playing the game. For the security parameter
$\lambda\in\Z^+$ and the delay parameter $T=T(\lambda)\in\Z^+$, the time-lock game 
$\mathcal{G}^{T}_{\G}$ goes as follows: 
\begin{enumerate}
 \item $\adv_0$ is given a group $\G$ of order $|\G|\ge 2^\lambda$.
 \item $\adv_0$ computes some information $\st\in \{0,1\}^*$ on the input $\G$.
 \item $\adv_1$ is given the information $\st$ and an element $g\xleftarrow{\$}\G$ chosen uniformly at random.
 \item Observing $g$ and $\st$, $\adv_1$ outputs an element $h\in\G$ in time $<T$.
\end{enumerate}
The player $\adv=(\adv_0,\adv_1)$ wins the game $\mathcal{G}^{T}_{\G}$ if
$h=g^{2^T}\in{\G}$.
\end{definition}

\subsubsection{Hardness of Time-lock puzzle}
Here we describe two cases where time-lock puzzles are believed to be sequentially hard. 
The first one is celebrated as the time-lock puzzle in the group of unknown order~\cite{Rivest1996Time}.
The same idea has been conjectured for the class group of imaginary quadratic number
fields $Cl(d)$ in~\cite{Wesolowski2019Efficient}. We bring into a similar sequentially hard 
assumption but in the group of known order.

\begin{assumption}{\normalfont \textbf{(Time-lock puzzle in the group of unknown order).}} 
For all algorithms $\adv=(\adv_0,\adv_1)$ running in time $<T$ on $\poly(T)$ number of processors, 
the probability that $\adv$ wins the time-lock game $\mathcal{G}^{T}_{\G}$ is at most $\negl(\lambda)$
when the order of the group $\G$ is unknown~\cite{Rivest1996Time}. Formally,
$$\Pr[\adv \text{ wins } \mathcal{G}^{T}_{\G}\mid|\G| \text{ is unknown}] \le \negl(\lambda).$$
\end{assumption}

%
%
However, knowing the order does not necessarily ease the computation sequentiality. 
It is because, till date, the only faster algorithm that we know to compute $g^e$ 
for any $e \in \Z^+$ is to reduce it to $g^{e \mod |\G|}$ assuming that the group $\G$
is not efficiently solvable into smaller groups. Therefore, when $e<|\G|$ there are only two
options we are left with. If $e\le  e-|\G|)$ then compute $g^e$, otherwise,
compute $(g^{-1})^{|\G|-e}$.
Clearly, the knowledge of $|\G|$ reduces the number of required squaring only if $e>
|\G|/2$ but keeps the same when $e \le |\G|/2$. Therefore, time-lock
puzzle holds true even when the order $|\G|$ is known and the exponent is $e \le |\G|/2$.

\begin{assumption}
{\normalfont \textbf{(Time-lock puzzle in the group of known order).}}\label{known} 
For all algorithms $\adv=(\adv_0,\adv_1)$ running in time $<T$ on $\poly(T)$ number 
of processors, the probability that $\adv$ wins the time-lock game $\mathcal{G}^{T}_{\G}$ 
is at most $\negl(\lambda)$ when the order of the group $\G$ is known to be at least $2^{T+1}$. 
Formally, $$\Pr[\adv \text{ wins } \mathcal{G}^{T}_{\G}\mid|\G| \ge 2^{T+1}] \le \negl(\lambda).$$
\end{assumption}

\begin{definition}
{\normalfont\textbf{(Time-lock Oracle Game $\mathcal{G}^{\RO^{T}}_{\G})$}}\label{time} 
Let $\adv=(\adv_0,\adv_1)$ be a party playing the game. For the security parameter
$\lambda\in\Z^+$ and the delay parameter $T=T(\lambda)\in \Z^+$, the time-lock oracle game 
$\mathcal{G}^{\RO^{T}}_{\G}$ goes as follows: 
\begin{enumerate}
 \item $\adv_0$ is given the output of the $\setup(1^\lambda,T)\rightarrow \langle \RO,\G \rangle $.
 
 \item $\adv_0$ computes some information $\st\in \{0,1\}^*$ on the input $\G$ and $\RO$.
 \item An element $x \xleftarrow{\$} \X$ is chosen uniformly at random.
 \item $\adv_1$ is given the information $\st$ and an element $g\leftarrow \RO(x)$.
 \item Observing $g$ and $\st$, $\adv_1$ outputs an element $h\in\G$ in time $<T$.
\end{enumerate}
The player $\adv=(\adv_0,\adv_1)$ wins the game $\mathcal{G}^{T}_{\G}$ if
$h=g^{2^T}\in\G$.
\end{definition}

Under the assumption that the random oracle $\RO$ samples strings from its range uniformly at random 
the view of the distributions of $g$ in both the games $\mathcal{G}^{{T}}_{\G}$ and
$\mathcal{G}^{\RO^{T}}_{\G}$ are identical to $\adv$. Thus we infer that,

\begin{assumption} {\normalfont {\textbf{(Time-lock Oracle
Assumption).}}}\label{timelock}
 For all algorithms $\adv=(\adv_0,\adv_1)$ running in time $<T$ on $\poly(T)$ number
 of processors wins the time-lock oracle game $\mathcal{G}^{\RO^{T}}_{\G}$ with 
 the probability at most negligible in the security parameter $\lambda$. Mathematically,
\[
\Pr\left[
\begin{array}{l}
g^{2^T}=h
\end{array}
\Biggm| \begin{array}{l}
(\RO,\G)\leftarrow\setup(1^\lambda,T)\\
\st \leftarrow \adv_0(1^\lambda,T,\RO,\G)\\
x \xleftarrow{\$} \X \\
g \leftarrow \RO(x)\\
h \leftarrow\adv_1(g,\st)
\end{array}
\right] \le \negl(\lambda).
\]
\end{assumption}

The game $\sqrt{\mathcal{G}}^\RO_\F$ (Def.~\ref{funcroot}) is an example
of time-lock puzzle in the group of known order as $|\F^*_q|=q-1$ 
and $w=z^{\frac{q+1}{4}}\in\F^*_q$. By assumption~\ref{bound}, 
computing $w \in\G$ needs $\Omega(\log{q})$ squaring for any 
$z\in\QR^*_q$. Therefore, the game $\sqrt{\mathcal{G}}^\RO_\F$ 
(Def.~\ref{funcroot}) satisfies assumption~\ref{timelock} for $\G=\F^*_q$ and $T=\log{q}$.

\begin{theorem}\label{thm: sequentiality} {\normalfont \textbf{(Sequentiality).}}
Suppose $\adv$ be a player who breaks the sequentiality of the single squaring $\vdf$
with probability $p_{win}$. Then there is a player $\adf$ who wins the square root oracle 
game $\sqrt{\mathcal{G}}^\RO_\F$ (Def.~\ref{funcroot}) with probability $p_{win}$.
\end{theorem}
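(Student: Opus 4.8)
The plan is to build a reduction that takes an adversary $\adv = (\adv_0, \adv_1)$ breaking sequentiality of the $\vdf$ and constructs $\adf$ winning $\sqrt{\mathcal{G}}^\RO_\F$ with the same probability. First I would unpack what ``breaking sequentiality'' means here: by Definition~\ref{def: Sequentiality}, $\adv_1$ runs in parallel time $\sigma(T) < T$ on at most $\Gamma$ processors and, given $\st$ and a random $x \xleftarrow{\$} \X$, outputs $y = \eval(\pp, x)$ with non-negligible probability $p_{win}$. In our scheme $\eval(\pp,x) = \RO(x)^{(q+1)/4}$, which by Lemma~\ref{sroot} is exactly a square root of $\RO(x) \in \QR^*_q$. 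So an adversary that predicts $\eval$'s output is, verbatim, producing a square root of the challenge element faster than the $\Omega(\log q)$ squarings that Assumption~\ref{bound} says are required.

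The reduction $\adf$ would proceed as follows. Given the output $\langle \RO, \F_q \rangle$ of $\setup(1^\lambda, T)$ and a challenge $z \in \QR^*_q$ presented as $\RO(x)$ for a uniformly random $x \xleftarrow{\$} \X$ (this is precisely the setup of the Square Root Oracle Game in Def.~\ref{funcroot}), $\adf$ runs $\adv_0$ on the public parameters to obtain $\st$, then invokes $\adv_1(\st, x)$. When $\adv_1$ returns $y \in \F^*_q$, $\adf$ simply outputs $w := y$. Whenever $\adv$ succeeds — i.e., $y = \RO(x)^{(q+1)/4}$ — Lemma~\ref{sroot} gives $y^2 = \RO(x) = z$, so $w$ satisfies the winning condition $w^2 = z \in \QR^*_q$ of $\sqrt{\mathcal{G}}^\RO_\F$. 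Since the distribution of $x$ (hence of $z = \RO(x)$) is identical in the sequentiality experiment and in the oracle game, $\adf$ wins with probability exactly $p_{win}$, and its running time is that of $\adv$ up to negligible overhead. Combined with the earlier observation that $\sqrt{\mathcal{G}}^\RO_\F$ is an instance of the time-lock puzzle in the group of known order (with $\G = \F^*_q$, $T = \log q$, satisfying Assumption~\ref{timelock}), this yields that $p_{win}$ must be negligible, so the $\vdf$ is sequential.

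The main obstacle — and the point I would be most careful about — is the timing bookkeeping: one must argue that $\adf$ inherits $\adv_1$'s parallel running time $\sigma(T) < T$, so that a non-negligible $p_{win}$ genuinely contradicts Assumption~\ref{timelock}/\ref{bound} rather than being consistent with it. This requires matching the delay parameter of the $\vdf$ instance with the effective time bound $T = \log q$ of the square-root problem, i.e., checking that $\sigma(T) < T$ in the $\vdf$ translates to beating the $\Omega(\log q)$ lower bound for square roots in $\F^*_q$. A secondary subtlety is that $\adv_1$ may occasionally output the \emph{other} square root $-y$ rather than $y = \RO(x)^{(q+1)/4}$; but since $y = \eval(\pp,x)$ is a fixed deterministic value, ``breaking sequentiality'' is defined as outputting exactly that value, and in any case either root squares to $z$, so $\adf$ still wins. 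I would also note in passing that $\adv_0$'s preprocessing is harmless: it depends only on $\pp = \langle \F_q, \RO \rangle$, which $\adf$ forwards unchanged, so $\st$ carries no information about the specific challenge $z$.
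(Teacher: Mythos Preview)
Your proposal is correct and follows essentially the same reduction as the paper: $\adf$ forwards the challenge $x$ to $\adv$, takes $\adv$'s output $y$, and outputs $w=y$, winning $\sqrt{\mathcal{G}}^\RO_\F$ whenever $\adv$ succeeds. Your write-up is in fact more careful than the paper's own proof, explicitly handling $\adv_0$'s preprocessing, the timing bookkeeping, and the (non-)issue of the second square root, none of which the paper spells out.
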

\begin{proof}
 Run the $\setup(1^\lambda,T=\log{q})$ to obtain $\langle \RO,\F_q \rangle$. 
 $\adf$ calls $\adv$ on the input statement $x\xleftarrow{\$}\mathcal{X}$.
 Against $x$, $\adv$ outputs $y\in\F^*_q$ in time $< \log{q}$. 
 As $\adv$ breaks the sequentiality of this $\vdf$, $y^2=\RO(x) \in\F^*_q$ 
 with the probability $p_{win}$. So $\adf$ outputs $y$ and wins the game 
 $\sqrt{\mathcal{G}}^\RO_\F$ with probability $p_{win}$. 
\qed 
\end{proof}

In the next section, we show that our $\vdf$ is $(\poly(\lambda),(1-\epsilon)T)$-sequential. 

\section{Efficiency Analysis}\label{Efficiency}
Here we discuss the efficiencies of both the prover $\prv$ and the verifier 
$\vrf$ in terms of number of modulo squaring and the memory requirement.

\subsubsection{Cost of $\setup$}
The cost of constructing $\RO$ is
equal to that of constructing another random oracle $\RO':\X \rightarrow
\{\F^*_q\setminus{1}\}$ plus a single squaring over $\F^*_q$ because 
$\RO(x)\stackrel{def}{=}\RO'(x)^2$. The random oracles of type $\RO'$ are 
standard in this domain of research~\cite{Wesolowski2019Efficient,Pietrzak2019Simple}.

By lemma~\ref{prime} we need a prime $p=3\mod{4}$ and an odd $n$ for $q=3\mod{4}$.
Any prime selected uniformly at random satisfies the condition $p=3\mod{4}$ with
probability at least $1/2$. The dimension $n$ is chosen to satisfy $p^n\ge 2^\lambda$. 
The enumeration of the irreducible polynomial $f(x)$ need not be the part 
of $\setup$ as it may be sampled from a precomputed list. However, we recommend to use 
computer algebra systems like SAGEMATH in order to reliably sample $f(x)$. 

 \subsubsection{Proof Size} 
 The proof size is essentially zero as the proof $\pi=\bot$ is empty. 
 The output $y$ is an element in the group $\F^*_q$. So the size of 
 the output is $\log{q}$-bits.
 
 \subsubsection{Prover's Efficiency} As already mentioned in Theorem~\ref{thm: sequentiality}, 
 the prover $\prv$ needs at least $T=\log{q}$ sequential time in order to compute 
 $\RO(x)^{\frac{q+1}{4}}$. 
 The prover needs no proof. Thus the prover's effort is not only at least
 $T=\Theta(\log{q})$. We use the $\Theta$ notation to mean $(1-\epsilon)T$ for some
 $\epsilon \rightarrow 0$. We keep this provision because $\prv$ may speed up the field
 multiplications using $\poly(T)$ processors but not in time $<T-o(T)$.

\subsubsection{Verifier's Efficiency}
In this $\vdf$, verification needs only a single squaring over $\F^*_q$. 

\subsubsection{Upper bound on $T$} As the output by the prover needs $\log{q}$-bits,
we need to restrict $\log{q} \in \poly(\lambda)$ at most. Cryptographic protocols that deal
with objects larger than the polynomial-size in their security parameter are impractical.
Hence, for the group $\F^*_q$ the delay parameter $T=\log{q}$ must be bounded by
$\poly(\lambda)$.

We note that handling an element in the group $\F^*_q$ does not imply
the sequential effort by the verifier is proportional to $T=\log{q}$. As a comparison, we never
consider the complexity of operation over the group $\multgroup{N}$ for the $\vdf$s
in~\cite{Wesolowski2019Efficient,Pietrzak2019Simple} in table~\ref{tab : VDF}.
In fact, the sequentiality of squaring does not depend on the complexity of the
elementary group operation in the underlying group. The sequentiality should hold even
if these elementary operations are of in $\O(1)$-time.  


\subsection{Performance Comparison}\label{compare}
In this section, we demonstrate a few advantages of our $\vdf$.
\begin{description}
 
 \item [Lower bound on prover's parallelism]  
 The $\vdf$ based on injective rational maps demands $\poly(\lambda,T)$-parallelism 
 to compute the $\eval$ in time $T$~\cite{Dan2018VDF}. 
 In our case, no parallelism is required to compute the $\vdf$ in time $T$.

 \item [Proof size] 
 The proofs in the Wesolowski's and Pietrzak's $\vdf$s consume one and $\log{T}$ group 
 elements in their underlying group $\G$ (i.e.,$\multgroup{N}$ and $Cl(d)$)~\cite{Wesolowski2019Efficient,Pietrzak2019Simple}. 
 In order to communicate the proof and to verify efficiently $\log{|\G|} \in \poly(\lambda)$. 
 Our $\vdf$ requires no proof. 
 
 
 \item [Efficiency of $\setup$] The $\setup$ in the isogeny-based $\vdf$ may turn out to be as slow as 
  the $\eval$ itself~\cite{Feo2019Isogenie}. On the other hand, the $\setup$ in our $\vdf$ is
 as efficient as that in all other time-lock based $\vdf$s~\cite{Wesolowski2019Efficient,Pietrzak2019Simple}.

  \item [$\Omega(\lambda)$-Verifiability]
  All the existing $\vdf$s except~\cite{Dan2018VDF} at least $\lambda$ sequential effort for verification.
  Sect.~\ref{minseq} describes the concrete bound of the time-lock puzzle based $\vdf$s.
  On the contrary, the most important advantage of our $\vdf$ is that the verification
 requires only one squaring over $\F^*_q$.


  \item [Largeness of $T$] A limitation of our $\vdf$ is that it works only when
  $T=\log{q}\in \poly(\lambda)$ as the output size is dominated by $T$, 
  while all the above-mentioned $\vdf$s allow $T \in 2^{o(\lambda)}$.
  
\end{description}

In Table~\ref{tab : VDF}, we summarize the above description. 
\begin{table*}[h]
\caption{Comparison among the existing $\vdf$s. $T$ is the delay parameter, $\lambda$ is 
the security parameter and $\Gamma$ is the number of processors. All the quantities may be
subjected to $\mathcal{O}$-notation, if needed.}
\label{tab : VDF}
 \centering
 \begin{tabular}{|l@{\quad}|r@{\quad}|r@{\quad}|r@{\quad}|r@{\quad}|r@{\quad}|r@{\quad}}
    \hline
        $\vdf$s & $\eval$ & $\eval$ &  $\Verify$ & $\setup$ & Proof   \\
(by authors)&  Sequential   & Parallel     &       &         &  size   \\
    \hline
    
        
    Boneh et al.~\cite{Dan2018VDF}         & $T^2$ & $>T-o(T)$  &  $\log{T}$  & $\log{T}$ & $\textendash$  \\
    [0.3 em] \hline
    
    Wesolowski~\cite{Wesolowski2019Efficient} & $(1+\frac{2}{\log{T}})T$
& $(1+\frac{2}{\Gamma\log{T}})T$  &  $\lambda\log{\lambda}$  & $\lambda^{3}$ & $\lambda^{3}$ \\
    [0.3 em] \hline
    
    Pietrzak~\cite{Pietrzak2019Simple}        & $(1+\frac{2}{\sqrt{T}})T$
& $(1+\frac{2}{\Gamma\sqrt{T}})T$  &  $\lambda\log{T}$  & $\lambda^{3}$ & $\log{T}$ \\
    [0.3 em] \hline

    Ephraim et al.~\cite{Ephraim20VDF}        & $(1+\frac{2}{\sqrt{T}})T$
& $(1+\frac{2}{\Gamma\sqrt{T}})T$  &  $\lambda$  & $\lambda^{3}$ & $\log{T}$ \\
    [0.3 em] \hline 

    Feo et al.~\cite{Feo2019Isogenie}         & $T$   & $T$  &  $\lambda$  & $T\log{\lambda}$ & $\textendash$ \\
    [0.3 em] \hline 
    
   \textbf{Our work}                         & $T$  & $(1-\epsilon)T$  &  $1$  & $\lambda^3$ & $\textendash$ \\
    [0.3 em]\hline
 \end{tabular}
\end{table*}

\section{Conclusions}\label{dis}

This chapter derives the first $\vdf$ that verifies its computation with a single
squaring. It has been derived from a new sequential assumption namely the time-lock
puzzle in the group of known order. At the same time, this is a $\vdf$ from the
complexity class of all deterministic computation as we have shown the delay parameter 
$T$ is in $\poly(\lambda)$.


\end{document}